\documentclass[sigconf]{acmart}

\usepackage{amsthm,amsmath}
\usepackage{tikz}
\usepackage{pgfplots}

\usepackage{mathtools}
\usepackage{enumerate}
\usepackage{multirow}

\newtheorem{theorem}{Theorem}

\newtheorem{thm}[theorem]{Theorem}

\newtheorem{algorithm}[theorem]{Algorithm}

\newtheorem{example}[theorem]{Example}

\makeatletter
\newcommand{\myitem}[1]{%
\item[(#1)]\protected@edef\@currentlabel{#1}%
}
\makeatother

\overfullrule=1ex

\def\eatspace#1{#1}
\def\step#1#2{\par\kern1pt\hangindent#2em\hangafter=1\noindent\rlap{\small#1}\kern#2em\relax\eatspace}

\let\set\mathbb
\def\<#1>{\langle#1\rangle}

\def\diag{\operatorname{diag}}

\def\e{\mathrm{e}}

\copyrightyear{2022}
\acmYear{2022}
\setcopyright{acmcopyright}\acmConference[ISSAC '22]{Proceedings of the 2022 International Symposium on Symbolic and Algebraic Computation}{July 4--7, 2022}{Villeneuve-d'Ascq, France}
\acmBooktitle{Proceedings of the 2022 International Symposium on Symbolic and Algebraic Computation (ISSAC '22), July 4--7, 2022, Villeneuve-d'Ascq, France}
\acmPrice{15.00}
\acmDOI{10.1145/3476446.3535486}
\acmISBN{978-1-4503-8688-3/22/07}

\settopmatter{printacmref=true}

\begin{document}
\fancyhead{}
\title{Guessing with Little Data}
\titlenote{M. Kauers was supported by the Austrian FWF grant P31571-N32.
  The research reported in this article has been partly funded by BMK, BMDW,
  and the Province of Upper Austria in the frame of the COMET Programme
  managed by FFG in the COMET Module S3AI.}

\author[M. Kauers]{Manuel Kauers}
\affiliation{%
  \institution{Institute for Algebra, Johannes Kepler University}
  \city{A4040 Linz}
  \country{Austria}
}
\email{manuel.kauers@jku.at}

\author[C. Koutschan]{Christoph Koutschan}
\affiliation{%
  \institution{RICAM, Austrian Academy of Sciences}
  \city{A4040 Linz}
  \country{Austria}
}
\email{christoph.koutschan@ricam.oeaw.ac.at}

\begin{abstract}
  Reconstructing a hypothetical recurrence equation from the first
  terms of an infinite sequence is a classical and well-known technique
  in experimental mathematics.
  We propose a variation of this technique which can succeed with
  fewer input terms.
\end{abstract}
\begin{CCSXML}
	<ccs2012>
	<concept>
	<concept_id>10010147.10010148.10010149.10010150</concept_id>
	<concept_desc>Computing methodologies~Algebraic algorithms</concept_desc>
	<concept_significance>500</concept_significance>
	</concept>
	</ccs2012>
\end{CCSXML}

\ccsdesc[500]{Computing methodologies~Algebraic algorithms}

\keywords{Experimental Mathematics, D-finite Functions, Lattice Reduction, Integer Sequences}
\maketitle

\section{Introduction}

A simple but powerful technique which has become an important tool in
experimental mathematics takes as input the first few terms of an infinite
sequence and returns as output a plausible hypothesis for a recurrence
equation that the sequence may satisfy, or a plausible hypothesis for a
differential equation satisfied by its generating function. The principle
is known as automated guessing as it somehow makes a guess how the
infinite sequence continues beyond the finitely many terms supplied as
input. In certain situations where sufficient additional information is
available about the sequence at hand, automated guessing can be combined
with other techniques from computer algebra that confirm that the guessed
equation is correct. One of many successful applications of this paradigm
is the proof of the qTSPP conjecture~\cite{koutschan10a}.

Technically, the guessing problem for linear recurrences can be solved by linear algebra. Given
$a_0,\dots,a_N$, we choose an order $r$ and a degree $d$ and make an ansatz
with undetermined coefficients: $\sum_{i=0}^r\sum_{j=0}^d c_{i,j}n^j a_{n+i}=0$.
Using the available terms of the sequence, we can instantiate
the ansatz for $n=0,\dots,N-r$ and get $N-r+1$ linear constraints on the
$(r+1)(d+1)$ unknown coefficients~$c_{i,j}$. If $r$ and $d$ are chosen
such that $(r+1)(d+2)\leq N+2$, this homogeneous linear system is not
expected to have a (nontrivial) solution. If it does have a solution, this is
interpreted as evidence in favor of the correctness of the corresponding
equation. The more the number of equations exceeds the number of variables,
the stronger is the evidence that the solution is not just noise but means
something.

The linear systems arising from guessing problems can be solved efficiently
by Hermite-Pad\'e approximation~\cite{beckermann94,giorgi03,BoJeMoSc17,JeaNeiVil2020}.
Modern implementations have no trouble handling examples where $N$ is in
the range of 10000. There are also variants for the multivariate setting~\cite{BeBoFa15,BerthomieuF2018,BerFau20}
as well as experiments with approaches that use machine learning instead of
linear algebra~\cite{gauthier22}. 

Although it rarely happens in practice, a guessed equation may be incorrect.
It is therefore important to be able to assess the quality of a guess. The
amount of overdetermination of the linear system was already mentioned as a
source of trust. Some further tests that may help to distinguish correct
equations from noise have been proposed in~\cite{bostan09}:
a correct equation is likely to contain short integer coefficients
while a wrong guess will typically contain long integers;
a correct recurrence for an integer sequence must produce only integers when it is unrolled
while a wrong guess will typically produce rational numbers;
a correct differential equation for a generating function is likely to have nice singularities
while a wrong guess will typically have awkward singularities;
a correct differential equation for a generating function of an integer sequence with moderate
growth must have nilpotent $p$-curvature~\cite{bostan14,bostan16,pages21}
while a wrong guess will typically not have this property.
All these tests are extremely strong and make guessing a very reliable
tool in practice. Especially for large examples we virtually never encounter
wrong guesses. 

The focus of this paper is on small examples. The assumption is that $N$ is small
and that further terms cannot be obtained at reasonable cost. A prominent example
for such a situation is the number of permutations avoiding the pattern $1324$
for which, despite tremendous efforts~\cite{johansson14,conway18}, only the first
$50$ terms are known. If such a sequence satisfies
an equation of order~$r$ and degree~$d$ with $(r+1)(d+2)>N+2$, we won't be able
to find this equation directly. In this situation, it can be exploited that an equation
of slightly higher order may have substantially lower degree, so that we may find an equation
using $r+1$ and $d/2$, for example. This phenomenon is well understood~\cite{jaroschek13a} and
has been exploited by guessing software since long. Our assumption here is that $N$
is so small that for every choice $(r,d)$ with $(r+1)(d+2)\leq N+2$ the linear system has
no solution, so that trading order against degree does not help. For sequences where
every other term is zero it has been observed~\cite{kauers19b} that removing the zeros from
the data can bring a small advantage. In the present paper, we do not assume that every
other term is zero. 

Our idea is to use the plausibility tests mentioned above in the search for a plausible
candidate equation. For most tests, we do not know how this idea could be reasonably
implemented. For example, restricting the search to recurrences that in addition to
fitting the given data have the property that the next term they generate is an integer
seems to require
solving nonlinear diophantine equations. Similarly, it is not clear how we could
enforce nicely behaved singularities or a nilpotent curvature at a reasonable cost.
The one thing we can do at a reasonable cost is search for equations with short integer
coefficients. The purpose of this paper is to explore this idea. Based on the evidence
reported below, we claim that the search for equations with short integer coefficients
can lead to plausible conjectured equations that are out of reach for classical
guessing algorithms.

\section{Preliminaries}

For a given matrix $M\in\set Z^{n\times m}$, we will need to compute a basis of
$\ker_{\set Z}M$, the $\set Z$-submodule of $\set Z^m$ consisting of all vectors $x\in\set Z^m$
such that $Mx=0$. Recall that this can be done using the Hermite normal form~\cite{cohen93,storjohann96}.
An integer matrix is said to be in Hermite normal form if it has a staircase shape, like
a matrix in row-reduced form, but where the pivot entries may be arbitrary positive integers
(not just~$1$) and the entries above a pivot may be any nonnegative integers smaller than
the pivot. For example,
\[
\begin{pmatrix}
  3 & -5 & 1 & 0 & 2 \\
  0 & 0 & 2 & 0 & -1 \\
  0 & 0 & 0 & 1 & 2
\end{pmatrix}
\]
is a Hermite normal form. For every matrix $M\in\set Z^{n\times m}$, there is a unique
Hermite normal form $H\in\set Z^{n\times m}$, called the Hermite normal form of~$M$,
such that the rows of $H$ generate the same $\set Z$-submodule of $\set Z^n$ as the
rows of~$M$. Moreover, if $H$ is the Hermite normal form of $(M^\top|I_m)$, and if
$r_1,\dots,r_k\in\set Z^m$ are all the nonzero vectors such that $(0,\dots,0,r_i)$
is a row of~$H$, then $\{r_1,\dots,r_k\}$ is a $\set Z$-module basis of $\ker_{\set Z}M$.
For example, for
\[
M = \begin{pmatrix}
  13 & 0 & 18 & 9\\
  1 & 9 & 0 & 0
  \end{pmatrix}
\]
the Hermite normal form of $(M^\top|I_4)$ is
\[
  H=\begin{pmatrix}
  1 & 7 & 7 & 0 & 0 & -10 \\
  0 & 9 & 0 & 1 & 0 & 0 \\
  0 & 0 & 9 & -1 & 0 & -13 \\
  0 & 0\smash{\rlap{\kern.5em\rule[-.5ex]\fboxrule{4.5em}}}
  & 0 & 0 & 1 & -2 
  \end{pmatrix},
\]
so a basis of $\ker_{\set Z}M$ is $\{(9,-1,0,-13),(0,0,1,-2)\}$.

Note that while it makes an essential difference whether we ask for kernel elements
in $\set Q^m$ or kernel elements in $\set Z^m$, it is not an essential difference
whether the entries of $M$ belong to $\set Z$ or to~$\set Q$, because we have $Mx=0$
if and only if $cMx=0$ for every $c\in\set Z\setminus\{0\}$, so we can simply
clear denominators in $M$ if there are any. 

A \emph{reduced basis} of a submodule of $\set Z^m$ is one that consists of relatively
short vectors. The precise definition of ``relatively short'' does not really matter
for our purposes. It suffices to know that there are algorithms, e.g., the LLL algorithm
or the BKZ algorithm,
which turns any given basis into a reduced basis, and that the first vector in a
reduced basis is at most $2^m$ times longer than the shortest nonzero element of the
submodule. See~\cite{cohen93,vzgathen99} for further details and \cite{nguyen09,novocin11,stehle17}
for some recent developments. 

Besides algorithms for finding short vectors in a given module, there are
also general bounds on the lengths of short vectors. A first result in this direction
known as Siegel's lemma~\cite{siegel29} is proved by the pidgeonhole principle. We will
use the following sharper bound, which was achieved by using geometry of numbers~\cite{bombieri83}:

\begin{thm}[Bombieri--Vaaler]\label{thm:BV}
  Let $M\in\set Z^{n\times m}$ with $n<m$, and let $g$ be the $\gcd$ of all
  $n\times n$ minors of~$M$. Then $\ker_{\set Z}M$ contains a nonzero element
  $x\in\set Z^m$ with
  \[
    ||x||_\infty\leq \biggl(\frac1g \sqrt{\det\bigl(MM^\top\bigr)}\,\biggr)^{1/(m-n)}.
  \]
\end{thm}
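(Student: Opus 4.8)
The plan is to read off the bound from the geometry of the lattice $\Lambda:=\ker_{\set Z}M$. We may assume that $M$ has full row rank $n$, so that $g\geq1$ and $\det(MM^\top)>0$ and both sides of the asserted inequality are well defined; write $k:=m-n$. Then $\Lambda$ is a lattice of rank $k$ sitting inside the $k$-dimensional subspace $V:=\ker_{\set R}M$ of $\set R^m$, and it is primitive (saturated) in $\set Z^m$, since $\set Z^m/\Lambda$ embeds into $\set Q^m/V$ and is therefore torsion free. The shortest nonzero vector of $\Lambda$ in the $\norm{\cdot}_\infty$-norm is exactly the object the theorem bounds, so everything reduces to two tasks: computing the covolume of $\Lambda$, and converting that covolume into a length bound.

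The heart of the argument is the covolume identity
\[
  \det\Lambda=\frac1g\sqrt{\det(MM^\top)}.
\]
To prove it I would pass to the orthogonal complement lattice $\Lambda^\perp:=\set Z^m\cap V^\perp$, where $V^\perp$ is the real row span of $M$. The $n$ rows of $M$ generate a sublattice $\Lambda_{\mathrm{row}}\subseteq\Lambda^\perp$ whose Gram matrix is precisely $MM^\top$, so $\det\Lambda_{\mathrm{row}}=\sqrt{\det(MM^\top)}$. A Smith normal form $M=UDV$ with $U\in\mathrm{GL}_n(\set Z)$ and $V\in\mathrm{GL}_m(\set Z)$ shows that $\Lambda^\perp$ is generated by the first $n$ rows of $V$, while $\Lambda_{\mathrm{row}}$ is generated by those same rows scaled by the elementary divisors $s_1,\dots,s_n$; hence the index $[\Lambda^\perp:\Lambda_{\mathrm{row}}]=s_1\cdots s_n$ equals the $n$-th determinantal divisor, which is exactly $g$, giving $\det\Lambda^\perp=\frac1g\sqrt{\det(MM^\top)}$. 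Finally, because $\set Z^m$ is unimodular one has $\det\Lambda=\det\Lambda^\perp$: primitivity lets every $\set Z$-functional on $\Lambda$ extend to $\set Z^m$, so the orthogonal projection $\pi_V(\set Z^m)$ is the dual lattice $\Lambda^{*}$ inside $V$, and the factorization $1=\det\Lambda^\perp\cdot\det\pi_V(\set Z^m)$ together with $\det\pi_V(\set Z^m)=1/\det\Lambda$ yields the claim.

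With the covolume in hand, set $B:=(\det\Lambda)^{1/k}=\bigl(\frac1g\sqrt{\det(MM^\top)}\bigr)^{1/(m-n)}$ and consider the compact symmetric convex body $K:=\{x\in V:\norm{x}_\infty\leq B\}$, the central slice of the box $[-B,B]^m$ by the subspace $V$. By Minkowski's first convex-body theorem, $K$ contains a nonzero point $x$ of $\Lambda$ as soon as $\mathrm{vol}_k(K)\geq2^k\det\Lambda$, and any such $x$ satisfies $\norm{x}_\infty\leq B$, which is the asserted inequality. Since $\mathrm{vol}_k(K)=B^k\,\mathrm{vol}_k(K_1)$ with $K_1$ the slice of the unit box $[-1,1]^m$ by $V$, it remains to show $\mathrm{vol}_k(K_1)\geq2^k$. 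This is the one genuinely hard ingredient, and exactly where Bombieri and Vaaler sharpen Siegel's pigeonhole estimate: it is Vaaler's cube-slicing inequality, asserting that every central $k$-dimensional section of the cube $[-1,1]^m$ has volume at least $2^k$, the volume of a $k$-dimensional unit box, with equality for coordinate sections. Granting this, $\mathrm{vol}_k(K)\geq2^kB^k=2^k\det\Lambda$, Minkowski applies, and the proof concludes. I expect the covolume identity to be routine lattice theory; reproving Vaaler's inequality from scratch, via its Fourier-analytic (Brunn--Minkowski-type) argument, would be the main obstacle, so in practice I would invoke it as a cited black box. I also note that the sharper product bound $\norm{x_1}_\infty\cdots\norm{x_{m-n}}_\infty\leq\frac1g\sqrt{\det(MM^\top)}$ on a full system of independent solutions follows by the same route using Minkowski's \emph{second} theorem in place of the first, but for the single nonzero vector claimed here the first theorem suffices.
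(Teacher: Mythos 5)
The paper does not prove this theorem at all: it is imported verbatim from Bombieri and Vaaler's paper (the citation \cite{bombieri83} in the sentence preceding the statement), so there is no in-paper argument to compare yours against. What you have written is, in outline, exactly the standard proof from that reference: identify $\ker_{\set Z}M$ as a primitive rank-$(m-n)$ sublattice of $\set Z^m$, compute its covolume as $\frac1g\sqrt{\det(MM^\top)}$ via the duality $\det\Lambda=\det\Lambda^\perp$ for primitive sublattices of a unimodular lattice together with the Smith-normal-form identification of $[\Lambda^\perp:\Lambda_{\mathrm{row}}]$ with the $n$-th determinantal divisor $g$, and then apply Minkowski's first theorem to the slice of the cube $[-B,B]^m$ by $\ker_{\set R}M$. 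All of these steps are correct as you state them (including the use of the closed-body form of Minkowski's theorem, which tolerates equality in $\mathrm{vol}_k(K)\geq 2^k\det\Lambda$ since $K$ is compact, and the implicit full-row-rank assumption, without which the statement itself is vacuous). The honest caveat, which you yourself flag, is that the entire content of the theorem beyond Siegel's lemma lives in the one ingredient you black-box: Vaaler's cube-slicing inequality $\mathrm{vol}_k\bigl([-1,1]^m\cap V\bigr)\geq 2^k$. Without that, your argument only recovers a Hadamard/pigeonhole-quality bound; with it cited, your reduction is complete and matches the literature. So as a self-contained proof it has a deliberate gap, but as a proof relative to the cited cube-slicing theorem it is sound, and it is certainly no less complete than the paper, which proves nothing here.
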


\section{Algorithms}

For simplicity, we will discuss only the case of guessing linear recurrence equations
with polynomial coefficients. The algorithms are easily adapted to the search for
linear differential equations with polynomial coefficients, or for polynomial equations,
satisfied by the corresponding power series.
Throughout, we consider a $\set Q$-vector space basis $b_0,b_1,\dots$ of the space $\set Q[x]$
of polynomials with the property that for all $j\in\set N$, the polynomials $b_0,\dots,b_j$
generate the subspace of $\set Q[x]$ consisting of all polynomials of degree at most~$j$.

In the classical linear algebra approach, the choice of the basis is irrelevant, so it suffices to
consider the standard basis $b_j=x^j$.
For the variation under consideration here, the choice of the basis may have an effect on the
outcome. We will therefore formulate the algorithms for an arbitrary basis.
In our experiments, we found that the binomial basis $b_j=\binom{x+j}{j}$ worked well, as well
as shifted versions of the standard basis and the binomial basis, e.g., $b_j=(x+\lfloor r/2\rfloor)^j$
or $b_j=\binom{x+\lfloor r/2\rfloor+j}{j}$, where $r$ is the target order of the sought equation.

In order to write the linear system $\sum_{i=0}^r\sum_{j=0}^d c_{i,j} b_j(n) a_{n+i}=0$
for the undetermined coefficients~$c_{i,j}$ in matrix form, we define the following
matrices $A$ and~$B_j$.
For given terms $a_0,\dots,a_N\in\set Q$ and a target order~$r$, the matrix $A$ is defined by
\[
A=\begin{pmatrix}
  a_0 & \cdots & a_r \\ \vdots & \ddots & \vdots \\ a_{N-r} & \cdots & a_N\end{pmatrix}
  \in\set Q^{(N-r+1)\times(r+1)}.
\]
For $j\in\set N$, we further define the matrix $B_j$ by
\[
B_j:=\begin{pmatrix}
  b_j(0) & \cdots & b_j(0) \\ \vdots & & \vdots \\ b_j(N-r) & \cdots & b_j(N-r)
\end{pmatrix}\in\set Q^{(N-r+1)\times(r+1)}.
\]
For two matrices $M_1,M_2$ of the same format, we write $M_1\odot M_2$ for the component-wise product
of $M_1$ and~$M_2$. The space of all linear recurrence equations of order $\leq r$ with polynomial
coefficients of degree $\leq d$ which are valid on the given terms $a_0,\dots,a_N$
is then the kernel of the matrix
\[
  (A\odot B_d|A\odot B_{d-1}|\cdots|A\odot B_0)\in\set Q^{(N-r+1)\times(r+1)(d+1)}.
\]
Our first algorithm computes a short vector with integer components in this space.

\begin{algorithm}\label{alg:1}
  Input: $a_0,\dots,a_N\in\set Q$, $r,d\in\set N$\\
  Output: A linear recurrence of order $r$ and degree $d$ which matches the given data and involves short integers,
  or ``no recurrence found''. 

  \smallskip
  \step 12 Compute a $\set Z$-module basis $v_1,\dots,v_m\in\set Z^{(r+1)(d+1)}$ of
    \[
    \ker_{\set Z} (A\odot B_d|A\odot B_{d-1}|\cdots|A\odot B_0).
    \]
  \step 22 if $m=0$, then return ``no recurrence found''.
  \step 32 Apply LLL to $v_1,\dots,v_m$, call the result $w_1,\dots,w_m$.
  \step 42 Return the recurrence corresponding to the vector~$w_1$.
\end{algorithm}

Note that the output ``no recurrence found'' can only occur if $(r+1)(d+2)\leq N+2$.

\begin{example}
  Consider the sequence $a_n=\sum_{k=0}^nC_k$, where $C_k$ is the $k$th Catalan number.
  The linear algebra approach needs to know $a_0,\dots,a_6$ in order to detect the
  recurrence
  \[
    (6 + 4 n) a_n - (9 + 5 n) a_{n+1} + (3 + n) a_{n+2} = 0.
  \]
  Alg.~\ref{alg:1} can find this equation already from $a_0,\dots,a_5$.
  From these terms and the basis elements $b_0=1$ and $b_1=x$, we construct the matrix
  \[
  M=
  \begin{pmatrix}
    0 & 0 & 0 & 1 & 2 & 4 \\
    2 & 4 & 9 & 2 & 4 & 9 \\
    8 & 18 & 46 & 4 & 9 & 23 \\
    \smash{\rlap{$\underbrace{\kern5.5em}_{=A\odot B_1}$}}27 & 69 & 195 & \smash{\rlap{$\underbrace{\kern4.8em}_{=A\odot B_0}$}} 9 & 23 & 65\\
  \end{pmatrix}.
  \]
  \par\bigskip\bigskip
  \noindent Using the Hermite normal form, it finds that
  \[
    \ker_{\set Z}M=\<\begin{pmatrix}15\\-14\\3\\2\\1\\-1\end{pmatrix},\begin{pmatrix}41\\-37\\8\\0\\12\\-6\\\end{pmatrix}>\subseteq\set Z^6.
  \]
  Applying LLL to this basis gives the reduced basis
  \[
  \{\begin{pmatrix}-4\\5\\-1\\-6\\9\\-3\end{pmatrix},\begin{pmatrix}11\\-9\\2\\-4\\10\\-4\end{pmatrix}\}.
  \]
  The first vector in this basis contains the coefficients of the correct recurrence. 
\end{example}

Alg.~\ref{alg:1} does more than required in that it not only finds one short vector but
a whole basis of short vectors. This can be disadvantageous if the first vector of the
LLL-basis, which is the only one we care about, is much shorter than the other vectors.
In such a situation, it might be better to use the following variant, which is based on
homomorphic images and terminates as soon as the modulus is large enough to recover the
short vector, regardless of how long the remaining vectors are.

\begin{algorithm}\label{alg:2}
  Input/Output: like for Alg.~\ref{alg:1}.

  \smallskip
  \step 12 Let $M=(A\odot B_d|A\odot B_{d-1}|\cdots|A\odot B_0)$.
  \step 22 Let $p$ be a prime and set $q=p$.
  \step 32 Compute a row-reduced basis~$V=\{v_1,\dots,v_m\}$ of $\ker M$ mod~$p$.
    Row-reduced means that the matrix $(v_1|\dots|v_m)^\top$ should be row-reduced.
  \step 42 If $V=\emptyset$, then return ``no recurrence found''.
  \step 52 repeat:
  \step 63 Let $p$ be a new prime and compute a row-reduced basis $W$ of $\ker M$ mod~$p$.
  \step 73 Merge $V$ and $W$ into a basis of $\ker M$ mod $pq$ using Chinese remaindering,
    call this basis $V$ and set $q=pq$.
  \step 83 Apply LLL to $V\cup\{q e_1,\dots, q e_{(r+1)(d+1)}\}$ and let $w$ be the
    first vector in the output, where $e_i$ denotes the $i$th unit vector of
    length $(r+1)(d+1)$.
  \step 92 until the recurrence corresponding to $w$ matches the given data.
  \step {10}2 return this recurrence.  
\end{algorithm}

\begin{thm}
  Alg.~\ref{alg:2} is correct and terminates. 
\end{thm}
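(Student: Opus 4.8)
The plan is to reduce first to the case where the matrix $M:=(A\odot B_d\mid\cdots\mid A\odot B_0)$ has integer entries — we may clear denominators, which changes neither $\ker M$ nor its reductions modulo primes — and to write $\ell:=(r+1)(d+1)$ for the number of columns of $M$. Correctness is then immediate at both exit points. If the algorithm returns a recurrence in step~10, then by the loop condition in step~9 the vector $w$ produced by LLL satisfies $Mw=0$; since $w$ is the first vector of an LLL-reduced basis of a full-rank lattice it is a nonzero integer vector, so it encodes a genuine nontrivial integer recurrence valid on $a_0,\dots,a_N$. If instead it returns ``no recurrence found'' in step~4, then $\ker M\bmod p=\{0\}$, i.e. the reduction $\bar M\in\set F_p^{(N-r+1)\times\ell}$ has full column rank $\ell$; because a nonzero minor modulo $p$ is a nonzero integer minor, rank can only drop under reduction, so $\operatorname{rank}_{\set Q}M=\ell$ as well, whence $\ker_{\set Z}M=\{0\}$ and the answer is correct.

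For termination I would first identify the lattice that step~8 actually reduces. Since every prime multiplied in is new, $q$ is squarefree, so $\set Z/q$ is a product of fields and $V$ is a $\set Z/q$-basis of $\ker\bar M\bmod q$; hence the lattice generated by $V\cup\{qe_1,\dots,qe_\ell\}$ is exactly $L_q:=\{x\in\set Z^\ell:Mx\equiv 0\pmod q\}$, a full-rank sublattice of $\set Z^\ell$ that contains $\ker_{\set Z}M$. The quantitative heart of the argument is a gap inequality: with $c:=\max_i\sum_j\abs{M_{ij}}$, every $x\in L_q$ satisfies $\norm{Mx}_\infty\le c\,\norm{x}_\infty$, whereas if $Mx\ne 0$ then $Mx$ is a nonzero integer vector all of whose entries are divisible by $q$, so $\norm{Mx}_\infty\ge q$. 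Consequently $\norm{x}_\infty<q/c$ already forces $Mx=0$, that is, $x\in\ker_{\set Z}M$.

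Now assume $\ker_{\set Z}M\ne\{0\}$, the generic situation in which the loop is entered, and let $x^*$ be a shortest nonzero element of $\ker_{\set Z}M$. Since $x^*\in L_q$ for every $q$, the shortest vector of $L_q$ has length at most $\norm{x^*}$, and by the LLL bound recalled in the Preliminaries the first output vector $w$ satisfies $\norm{w}\le 2^\ell\,\norm{x^*}$, a bound independent of $q$. As $q$ strictly increases in each iteration and hence tends to infinity, eventually $q>c\,2^\ell\,\norm{x^*}$ (absorbing the fixed constants relating $\norm{\cdot}_2$ and $\norm{\cdot}_\infty$); at that point the gap inequality yields $Mw=0$, the condition in step~9 is met, and the algorithm returns. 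This gives termination whenever a recurrence exists.

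The main obstacle, and the step I would treat most carefully, is the boundary case $\ker_{\set Z}M=\{0\}$ with an unlucky first prime: there $\ker\bar M\bmod p$ may be nonzero even though $M$ has full column rank over $\set Q$, so the loop is entered although no recurrence exists, and the termination bound above no longer applies because the shortest vector of $L_q$ grows without bound while $Mw\ne 0$ always. To close this gap I would argue that only finitely many primes are ``bad'' — precisely those dividing all $\ell\times\ell$ minors of $M$ whose non-vanishing certifies full rank — so a generic choice of the first prime in step~2 makes $\bar M$ rank-preserving and triggers the correct early exit at step~4. Pinning down this genericity, together with the norm-equivalence constants in the termination threshold, is the only delicate part; the rest is the routine gap-and-LLL estimate sketched above.
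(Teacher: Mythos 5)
Your proof is correct and its core is the same as the paper's: identify the lattice reduced in step~8 as $L_q=\{x\in\set Z^{\ell}:Mx\equiv0\pmod q\}$, note that a fixed shortest $x^*\in\ker_{\set Z}M$ lies in every $L_q$ so the LLL output is never longer than $2^{\ell}\lVert x^*\rVert$, and conclude that once $q$ is large enough every vector of that length must already lie in $\ker_{\set Z}M$. The difference is in how that last step is justified. The paper argues qualitatively: the $L_q$ form a strictly decreasing chain with intersection $\ker_{\set Z}M$, only finitely many integer vectors have length at most $2^{\ell}\lVert x^*\rVert$, and each one outside $\ker_{\set Z}M$ eventually drops out of the chain. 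Your gap inequality $\lVert x\rVert_\infty\ge q/c$ for $x\in L_q$ with $Mx\ne0$ replaces this compactness argument by an explicit threshold $q>c\,2^{\ell}\lVert x^*\rVert$; this is slightly stronger (it gives a bound on the number of primes needed) and arguably cleaner. Your treatment of the unlucky-first-prime case is also more careful than the paper's: the published proof simply asserts that reaching line~5 implies $\ker_{\set Q}M\ne\{0\}$, which holds only if the first prime does not annihilate all maximal minors of $M$; when $\ker_{\set Q}M=\{0\}$ and $p$ is unlucky, the loop is entered and, exactly as you observe, the exit condition can never be met. So you have correctly isolated an implicit luckiness hypothesis rather than introduced a gap of your own. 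One further point that both you and the paper pass over: the Chinese remaindering in step~7 presupposes that the kernels modulo the different primes all have the same dimension, which is again a statement that fails only for finitely many primes but is not guaranteed by the algorithm as written.
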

\begin{proof}
  Correctness is clear. We show termination. 
  If the algorithm reaches the loop starting in line~5, $\ker M$ is a nontrivial
  subspace of $\set Q^{(r+1)(d+1)}$. Then also the $\set Z$-module $\ker_{\set Z} M\subseteq\set Z^{(r+1)(d+1)}$
  is nontrivial.
  By construction, the lattice generated by $V\cup\{q e_1,\dots, q e_{(r+1)(d+1)}\}$ in line~8
  consists of all vectors $w\in\set Z^{(r+1)(d+1)}$ such that $w$ is in the kernel of $M$ mod~$q$.
  Each such lattice is properly contained in the lattice considered in the previous iteration.
  Therefore, every vector in such a lattice which is not an element of $\ker_{\set Z} M$ will
  disappear in one of the later iterations.
  In particular, for every fixed nonzero element $w$ of $\ker_{\set Z}M$, there will be an
  iteration when all elements of the lattice that do not belong to $\ker_{\set Z}M$ are at least
  $2^{(r+1)(d+1)}$ times longer than~$w$.
  In this situation, LLL is guaranteed to find $w$ or a shorter element of~$\ker_{\set Z}M$. 
\end{proof}

\begin{example}
  Let $a_n$ be defined as in the previous example and suppose again that we want to recover the
  recurrence of order~2 and degree~1 from the known terms $a_0,\dots,a_5$.
  We have
  \[
  \ker_{\set Z_{13}} M = \<\begin{pmatrix}1\\0\\11\\5\\11\\3\end{pmatrix},\begin{pmatrix}0\\1\\6\\8\\8\\7\end{pmatrix}>
    \subseteq\set Z_{13}^6
    \text{ and }
    \ker_{\set Z_{17}} M = \<\begin{pmatrix}1\\0\\9\\14\\6\\2\end{pmatrix},\begin{pmatrix}0\\1\\7\\10\\10\\1\end{pmatrix}>
    \subseteq\set Z_{17}^6,
  \]
  which Chinese remaindering merges to
  \[
  \ker_{\set Z_{221}} M = \<\begin{pmatrix}1\\0\\128\\31\\193\\172\end{pmatrix},\begin{pmatrix}0\\1\\58\\112\\112\\137\end{pmatrix}>
    \subseteq\set Z_{221}^6.
  \]
  Applying LLL to these vectors and $221e_1,\dots,221e_6$ gives the basis
  \[
  \{\begin{pmatrix}-4\\5\\-1\\-6\\9\\-3\end{pmatrix},
    \begin{pmatrix}11\\-9\\2\\-4\\10\\-4\end{pmatrix},
    \begin{pmatrix}0\\8\\22\\12\\12\\-9\end{pmatrix},
    \begin{pmatrix}-34\\-28\\-9\\9\\26\\40\end{pmatrix},
    \begin{pmatrix}15\\12\\-36\\41\\40\\25\end{pmatrix},
    \begin{pmatrix}-37\\-33\\-20\\19\\-8\\-56\end{pmatrix}
  \}.
  \]
  The first vector in this basis corresponds to the correct recurrence.
\end{example}

In most examples we have tried (see Sect.~\ref{sec:nongeneric}),
Alg.~\ref{alg:1} performed better than Alg.~\ref{alg:2}.  This indicates that
in these examples, the generic solutions of the linear systems corresponding
to wrong recurrences are not extremely long compared to the solutions which
correspond to correct recurrences. This proportion however shifts in favor of
Alg.~\ref{alg:2} when $m=\dim\ker M$ is small, i.e., when the provided data is
almost sufficient to succeed with the standard linear algebra guesser.  We
have randomly constructed such examples (see Sect.~\ref{sec:generic}) and
indeed observed a better performance of Alg.~\ref{alg:2} compared to
Alg.~\ref{alg:1}. In addition, the examples we considered had only a
relatively small number~$N$ of known terms, owing to the cost of LLL. It is
typical for sequences arising in combinatorial applications that the size of
the $n$th term grows linearly (or more) in~$n$. Therefore, Alg.~\ref{alg:2}
becomes increasingly interesting for examples in which more (and thus longer)
terms have to be taken into account.

A disadvantage of Alg.~\ref{alg:2} is that LLL has to start from scratch in every iteration.
The reason is that Chinese remaindering must be applied to normalized nullspace bases and
applying LLL destroys the normalization. We do not see how to adapt Alg.~\ref{alg:2} so as
to efficiently recycle the output of one LLL computation in subsequent iterations.

It is possible however to recycle the output of earlier LLL computations if we consider a range of degrees
rather than a single degree. This is detailed in the following algorithm. Note that while in
the linear algebra approach it suffices to consider the largest degree which for a fixed $r$
leads to an overdetermined system, in the LLL-based approach the failure of Alg.~\ref{alg:1}
or Alg.~\ref{alg:2} for a certain degree~$d$ has no immediate implication on whether it will
also fail for degree $d-1$ or degree $d+1$. We therefore need to check a range of degrees. 

\begin{algorithm}\label{alg:3}
  Input: $a_0,\dots,a_N\in\set Z$, $r,d_{\min},d_{\max}\in\set N$, and a function which maps a recurrence candidate
  to True (for ``plausible'') or False (for ``dubious'')\\
  Output: A plausible recurrence of order~$r$ and some degree $d$ in the range $d_{\min},\dots,d_{\max}$ which matches
  the given data and involves short integers, or ``no recurrence found''.

  \smallskip
  \step 12 Set $L=\emptyset$.
  \step 22 for $d=d_{\min},\dots,d_{\max}$, do:
  \step 33 Compute a $\set Z$-module basis $v_1,\dots,v_m\in\set Z^{(r+1)(d+1)}$ of
    \[
    \ker_{\set Z} (A\odot B_d|A\odot B_{d-1}|\cdots|A\odot B_0).
    \]
  \step 43 if $d>d_{\min}$, then:
  \step 54 replace $v_1,\dots,v_m$ by the rows of the Hermite normal form of the
    matrix whose rows are $v_1,\dots,v_m$ and let $\ell$ be such
    that the first $r+1$ components of $v_i$ are zero if and only if $i>\ell$.
  \step 63 attach $r+1$ leading zeros to the vectors in~$L$,
    apply LLL to these vectors and $v_1,\dots,v_\ell$,
    and redefine $L$ to be the output basis.
  \step 73 if the recurrence corresponding to the first vector in $L$ is plausible, then return this recurrence.
  \step 82 return ``no recurrence found''.
\end{algorithm}

Any of the conditions mentioned in the introduction can be used as plausibility
check, for example the condition that the next few terms generated by the candidate
recurrence from the given terms must be integers.

\begin{example}
  Consider now the sequence $a_n=\sum_{k=0}^n C_{3k}$, where $C_k$ is again the $k$th Catalan number.
  This sequence satisfies a recurrence of order~2, degree~3, and integer coefficients
  $|c_{i,j}|\leq3931$, which we want to recover
  from the terms $a_0,\dots,a_7$ using Alg.~\ref{alg:3}.
  For $d=0$ and $d=1$, the linear systems are overdetermined and their solution spaces are~$\{0\}$.
  For $d=2$, we find that $\ker_{\set Z}M$ is generated by three vectors, and applying LLL to them
  gives 
  \[
  \begin{pmatrix} -10770777\\ 12213849\\ -188031\\ 34593817\\ -38801808\\ 316322\\ 49496244\\ -55183749\\ 2040625 \end{pmatrix},\quad
  \begin{pmatrix} -5821934\\ 49822837\\ -776612\\ -57412313\\ 38007895\\ -1750361\\ 9044412\\ -14700087\\ 573595 \end{pmatrix},\quad
  \begin{pmatrix} 89543113\\ -3923255\\ 39847\\ -24321368\\ -8364412\\ 157455\\ 5994312\\ -6286292\\ 229880 \end{pmatrix}.
  \]
  For $d=3$, we find that $\ker_{\set Z}M$ is generated by six vectors. Using the Hermite normal form,
  we find the following basis:
  \[
  \begin{pmatrix} 1\\ 0\\ 5\\ 0\\ 2\\ \rule{.5ex}{6ex} \end{pmatrix}, \
  \begin{pmatrix} 0\\ 1\\ 2\\ 0\\ 1\\ \rule{.5ex}{6ex} \end{pmatrix}, \
  \begin{pmatrix} 0\\ 0\\ 9\\ 0\\ 1\\ \rule{.5ex}{6ex} \end{pmatrix}, \ 
  \begin{pmatrix} 0\\ 0\\ 0\\ 1\\ 1\\ \rule{.5ex}{6ex} \end{pmatrix}, \
  \begin{pmatrix} 0\\ 0\\ 0\\ 0\\ 3\\ \rule{.5ex}{6ex} \end{pmatrix}, \
  \begin{pmatrix} 0\\ 0\\ 0\\ 0\\ 0\\ \rule{.5ex}{6ex} \end{pmatrix},
  \]
  where each black rectangle hides seven integers with 19 or more decimal digits. 
  We replace the last three of these vectors by the LLL-basis of the previous operation,
  prolonged by three leading zeros, and call LLL.
  The first vector in the resulting basis contains the coefficients of the correct recurrence.
\end{example}

As the specification of Alg.~\ref{alg:3} is intentionally vague with regard to the shortness of the
integers in the output, we refrain from making a formal correctness statement. Instead, we only show
that the recycling of reduced bases from earlier iterations is done correctly.

\begin{thm}
  In the $d$th iteration of Alg.~\ref{alg:3}, the vectors to which LLL is applied in step~6
  form a $\set Z$-module basis of $\ker_{\set Z}(A\odot B_d|\cdots|A\odot B_0)$.
\end{thm}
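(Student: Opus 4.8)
The plan is to argue by induction on $d$. For the base case $d=d_{\min}$ the Hermite-form step~5 is skipped and $L$ is still empty, so the vectors passed to LLL in step~6 are just the basis $v_1,\dots,v_m$ of $\ker_{\set Z}(A\odot B_{d_{\min}}|\cdots|A\odot B_0)$ computed in step~3, and there is nothing to prove. For the inductive step I would first record that, because LLL returns a basis of the lattice spanned by its input, the statement for iteration $d-1$ guarantees that at the start of iteration $d$ the set $L$ is a $\set Z$-module basis of $K':=\ker_{\set Z}(A\odot B_{d-1}|\cdots|A\odot B_0)$.

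The decisive structural remark is that the iteration-$d$ matrix $M:=(A\odot B_d|A\odot B_{d-1}|\cdots|A\odot B_0)$ arises from the iteration-$(d-1)$ matrix by prepending $A\odot B_d$ as its first $r+1$ columns. Writing each $x\in\set Z^{(r+1)(d+1)}$ as $(x',x'')$ with $x'\in\set Z^{r+1}$, the sublattice $K_0:=\{x\in\ker_{\set Z}M:x'=0\}$ equals $\{(0,y):y\in K'\}$, so prolonging the basis $L$ by $r+1$ leading zeros yields a basis of $K_0$. On the other hand I claim that the Hermite normal form rows $v_{\ell+1},\dots,v_m$ form a second basis of $K_0$: their first $r+1$ entries vanish by the definition of $\ell$, while the pivots of $v_1,\dots,v_\ell$ occupy strictly increasing columns $c_1<\cdots<c_\ell$ inside the first $r+1$; reading off these pivot columns in turn shows that any $\set Z$-combination of $v_1,\dots,v_m$ with vanishing first $r+1$ entries must have zero coefficients on $v_1,\dots,v_\ell$. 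Hence $K_0$ is generated by $v_{\ell+1},\dots,v_m$, and since these are a subset of a basis of $\ker_{\set Z}M$ they are themselves a basis of $K_0$.

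Finally I would splice the two descriptions. Passing from the basis $\{v_1,\dots,v_\ell,v_{\ell+1},\dots,v_m\}$ of $\ker_{\set Z}M$ to the set $\{v_1,\dots,v_\ell\}\cup\{\text{prolonged }L\}$ used in step~6 keeps $v_1,\dots,v_\ell$ fixed and replaces the $K_0$-basis $v_{\ell+1},\dots,v_m$ by the $K_0$-basis coming from the prolonged $L$; the transition matrix is block lower-triangular with an identity block and a unimodular change-of-basis block on the diagonal, hence unimodular. Therefore the vectors fed to LLL in step~6 are again a $\set Z$-module basis of $\ker_{\set Z}M$, closing the induction. I expect the Hermite-normal-form argument that $v_{\ell+1},\dots,v_m$ span exactly $K_0$ to be the main obstacle; once the block decomposition of $M$ is set up, the other steps are routine.
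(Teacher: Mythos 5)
Your proof is correct and follows essentially the same route as the paper's: induction on $d$, the identification of vectors in $\ker_{\set Z}(A\odot B_d|\cdots|A\odot B_0)$ with $r+1$ leading zeros with elements of the previous kernel, and the Hermite-normal-form pivot argument to show that $v_{\ell+1},\dots,v_m$ span exactly that sublattice. Your explicit introduction of $K_0$ and the unimodular transition matrix at the end is just a slightly more formal packaging of the paper's two-inclusion argument.
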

\begin{proof}
  By induction on~$d$. For $d=d_{\min}$, we have $L=\emptyset$, and none of the $v_1,\dots,v_m$ are
  discarded due to the if-condition in line~4, therefore in line~6 LLL is applied to the vectors
  $v_1,\dots,v_m$ which were chosen as a basis of $\ker_{\set Z}(A\odot B_d|\cdots|A\odot B_0)$ in line~3.
  
  If for some $d>d_{\min}$ the claim is true in iteration $d-1$, then at the beginning of the $d$th iteration, $L$~contains
  a basis of $\ker_{\set Z}(A\odot B_{d-1}|\cdots|A\odot B_0)$. Denote the elements of this basis, padded
  with $r+1$ zeros, by $w_1,\dots,w_k$. Let $v_1,\dots,v_m$ form a basis of $\ker_{\set Z}(A\odot B_{d}|\cdots|A\odot B_0)$
  and at the same time the rows of a Hermite normal form, as in step~5. We then have to show that
  the $\set Z$-modules $\<w_1,\dots,w_k>$ and $\<v_{\ell+1},\dots,v_m>$ are equal.
  
  ``$\supseteq$'': As each $v_i$ ($i\in\{\ell+1,\dots,m\}$) is an element of
  $\ker_{\set Z}(A\odot B_d|\cdots|A\odot B_0)$ with $r+1$ leading zeros, chopping the $r+1$ leading zeros
  turns it into an element of $\ker_{\set Z}(A\odot B_{d-1}|\cdots|A\odot B_0)$.
  By assumption on $w_1,\dots,w_k$, it follows that $v_i\in\<w_1,\dots,w_k>$.   
  
  ``$\subseteq$'': As each $w_i$ ($i=1,\dots,k$) is an element of
  \[
  \ker_{\set Z}(A\odot B_{d-1}|\cdots|A\odot B_0)
  \]
  padded with $r+1$ leading zeros, it is an element of $\ker_{\set Z}(A\odot B_d|\cdots|A\odot B_0)$ and
  hence a $\set Z$-linear combination of $v_1,\dots,v_m$, say $w_i=\alpha_1v_1+\cdots+\alpha_mv_m$. It remains to
  show that $\alpha_1=\cdots=\alpha_i=0$. To see this, observe that $0=\pi(w_i)=\alpha_1\pi(v_1)+\cdots+\alpha_i\pi(v_i)+0$,
  where $\pi\colon\set Z^{(r+1)(d+1)}\to\set Z^{r+1}$ is the projection to the first $r+1$ coordinates. As
  $v_1,\dots,v_m$ form the rows of a Hermite normal form, so do $\pi(v_1),\dots,\pi(v_i)$, and as these vectors
  are nonzero and the nonzero rows of a Hermite normal form are linearly independent, we have $\alpha_1=\cdots=\alpha_i=0$,
  as claimed. 
\end{proof}

The LLL-algorithm has an incremental nature. In order to compute a reduced basis from an input basis $v_1,\dots,v_m$,
it first computes a reduced basis for the input basis $v_1,\dots,v_{m-1}$ recursively and then adjusts this reduced
basis to a reduced basis for $v_1,\dots,v_m$. In particular, if we know that the first few vectors of $v_1,\dots,v_m$
are already reduced, as in Alg.~\ref{alg:3}, the LLL algorithm can take this information into account. It is therefore
possible to implement Alg.~\ref{alg:3} in such a way that the total cost of all the LLL computations combined is no
more than the cost of a single LLL computation applied to a basis of $\ker_{\set Z}(A\odot B_{d_{\max}}|\cdots|A\odot B_0)$
whose elements form the rows of a Hermite normal form. 
This idea is also used in van Hoeij's LLL-based factorization algorithm~\cite{hoeij13}. 

An implementation of Alg.~\ref{alg:3} has been added to the Guess.m package for Mathematica~\cite{kauers09a}.
It is available as function \verb|GuessZUnivRE|. We will also include the algorithm in
the ore\_algebra package for Sage~\cite{kauers14b}.

\section{The Generic Case}\label{sec:generic}

In the linear algebra approach, the condition $(r+1)(d+2)\leq N+2$ imposes a restriction
on the choices of $r$ and $d$ in dependence on~$N$. By allowing underdetermined linear systems
in the new approach, we can also explore larger choices of $r$ and~$d$.
The underdetermined linear systems in the new approach will in general have solutions corresponding
to correct equations as well as solutions corresponding to incorrect equations. 
The limiting factor for the choice of $r$ and $d$ is now the requirement
that the correct equations must
have shorter coefficients than the incorrect equations, so that LLL has a chance to tell them
apart. This limiting factor is harder to quantify. In this section, we offer a somewhat
heuristic analysis of when this can be expected in the ``generic'' situation.

Assume that $\sum_{i=0}^r\sum_{j=0}^d c_{i,j} n^j a_{n+i}=0$ is a random recurrence
whose coefficients $c_{i,j}$ and initial values $a_0,\dots,a_{r-1}$ have bitsize
$\ell+1$, which means that they are randomly (uniformly) chosen from the set
$\{-2^\ell+1,\dots,2^\ell\}\subset\set{Z}$. By rewriting the recurrence as
\[
  a_{n+r} = -\frac{1}{p_r(n)}\cdot\bigl(p_{r-1}(n)a_{n+r-1} + \dots + p_0(n)a_n\bigr),
\]
where $p_i(n)=\sum_{j=0}^d c_{i,j}n^j$, we are able to unroll it. Clearly, in
such a generic situation, the numerators and denominators of $|a_n|$ will
grow, because there is no reason to expect considerable cancellations other
than accidentally occurring small common factors.
By defining
\[
  u_{n} = a_{r-1}\prod_{i=0}^{n-r} p_{r-1}(i),\quad
  v_{n} = \prod_{i=0}^{n-r} \bigl(-p_{r}(i)\bigr),
\]
we can approximate the $n$-th sequence term~$a_n$ by the quotient $u_n/v_n$,
and we may assume that $\gcd(u_n,v_n)$ is negligibly small compared to the
absolute values of $u_n$ and~$v_n$. Hence, this approximation omits the lower
terms $a_n,\dots,a_{n+r-2}$ in the recurrence equation, which corresponds to
approximating it by its two leading terms $p_r(n)a_{n+r}+p_{r-1}(n)a_{n+r-1}=0$.
Nevertheless, $v_n$ is a common denominator of all the sequence terms $a_0,\dots,a_n$.

Assume that the matrix $(A\odot B_d|A\odot B_{d-1}|\cdots|A\odot B_0)$ from
the guessing problem has $k:=N-r+1$ rows and $m:=(r+1)(d+1)$ columns, and that
the latter are indexed by pairs $(s,t)$ with $0\leq s\leq r$ and $0\leq t\leq
d$. No particular order on the pairs $(s,t)$ is specified here, since it will
be irrelevant for the subsequent analysis.

Since the sequence $(a_n)$ contains rational numbers, we must ensure that the
matrix has integer entries. This is achieved by noting that $a_{i+r}$
is the sequence element with the highest index that appears in row~$i$,
and hence multiplying this row by $v_{i+r}$ will do the job:
\[
  M = \diag(v_r,\dots,v_{r+k-1})\cdot(A\odot B_d|\cdots|A\odot B_0)\in\set{Z}^{k\times m}.
\]
The entries of~$M$ are then $m_{i,(s,t)}=i^t a_{i+s} v_{i+r}$, and therefore
the entries of the matrix $MM^\top$ are given by
\[
  \bigl(MM^{\!\top}\bigr)_{\!i,j} = \sum_{(s,t)} m_{i,(s,t)} m_{j,(s,t)} =
  \sum_{(s,t)} (ij)^t a_{i+s} a_{j+s} v_{i+r} v_{j+r}.
\]
The summand of this sum is expected to take its largest (absolute) value for
$s=r$ and $t=d$ (or $t=0$ when $ij=0$). Hence we approximate the sum by its
largest, dominating part (assuming $ij\neq0$)
\[
  (ij)^d a_{i+r} a_{j+r} v_{i+r} v_{j+r} = (ij)^d u_{i+r} u_{j+r}.
\]
Because the latter expression has the form $g(i)g(j)$ with $g(i)=i^du_{i+r}$,
it follows that all products $\prod_{i=0}^{k-1}
\bigl(MM^\top\bigr)_{i,\sigma(i)}$ are approximately of the same size, where
$\sigma$ runs through all permutations of $\{0,\dots,k-1\}$. This shows that
$\det\bigl(MM^\top\bigr)$ can hardly exceed $k!\prod_{i=0}^{k-1}g(i)^2$, while
due to cancellations, it may get arbitrarily close to~$0$. A reasonable
balance between the two extreme cases is to estimate the determinant as
follows:
\[
  \det\bigl(MM^\top\bigr)
  \approx \prod_{i=0}^{k-1} \bigl(MM^\top\bigr)_{i,i}
  \approx ((k-1)!)^{2d} \prod_{i=0}^{k-1} u_{i+r}^2.
\]
Finally, we need to estimate the size of the quantities~$u_n$. This is
done by replacing the polynomial~$p_{r-1}(i)$ in the definition 
\[
  u_n = a_{r-1}\prod_{i=0}^{n-r} p_{r-1}(i)
\]
by its leading term (except when $i=0$):
\[
  u_n \approx a_{r-1} \, c_{r-1,0} \prod_{i=1}^{n-r} c_{r-1,d} \, i^d =
  a_{r-1} \, c_{r-1,0} \, c_{r-1,d}^{n-r} \, ((n-r)!)^d.
\]
This allows us to estimate and simplify as follows
\begin{align*}
  \prod_{i=0}^{k-1} u_{i+r} &\approx
  \prod_{i=0}^{k-1} a_{r-1} \, c_{r-1,0} \, c_{r-1,d}^i \, (i!)^d \\
  &\approx (a_{r-1} \, c_{r-1,0})^k c_{r-1,d}^{k(k-1)/2} \cdot
  \left(\prod_{i=1}^{k-1} i!\right)^{\!d}.
\end{align*}
Combining everything and ignoring the gcd~$g$ from Theorem~\ref{thm:BV}
(which is hard to predict and in many instances just equal to~$1$),
our approximation for the Bombieri-Vaaler bound is
\begin{multline*}
  \sqrt{\det\bigl(MM^\top\bigr)}^{1/(m-k)} \approx \\
  \left((a_{r-1} \, c_{r-1,0})^k c_{r-1,d}^{k(k+1)/2}
  \left(\frac1k\prod_{i=1}^k i!\right)^{\!\!d} \,
  \right)^{\!\!1/((r+1)(d+1)-k)}.
\end{multline*}
We test experimentally how good this approximation is. For about $1200$
randomly generated recurrences, where $1\leq r\leq6$, $0\leq d\leq 6$, and
$6\leq\ell\leq100$, we plot the ratio between the bitsize of our
approximation and the bitsize of the true value of $\sqrt{\det(MM^\top)}^{1/(m-k)}$;
the ratios are sorted in increasing order.
In the vast majority of the cases, we are off by less than~$5\%$.

\begin{center}
  \begin{tikzpicture}
    \begin{axis}[width=5.8cm]
      \addplot[thin,black] coordinates {(0,1.05) (1217,1.05)};
      \addplot[thin,black] coordinates {(0,.95) (1217,.95)};
      \addplot[thick,black] table[col sep=comma]{BV-estimate.csv};
    \end{axis}
  \end{tikzpicture}
\end{center}

The question now is, for which~$k$, depending on $r,d,\ell$, the above
expression becomes smaller than~$2^\ell$. In that case we should not be able
to identify the true solution, because the Bombieri-Vaaler bound predicts a
solution with coefficients smaller than those of the sought recurrence.
Taking into account that $a_{r-1},c_{r-1,0},c_{r-1,d}$ are all bounded
by~$2^\ell$, and using the asymptotic expansion of the hyperfactorial, our
approximation leads to the following inequality
\[
  2^{\ell k(k+7)/2} \cdot
  \left(\frac{k^{k^2/2+k-7/12}}{\e^{3k^2/4+k-1/12}} \,
  \bigl(\sqrt{2\pi}\bigr)^{k+1}\right)^{\!d} \lesssim 2^{\ell(r+1)(d+1)}.
\]
For $\ell\to\infty$ the term in parentheses becomes insignificant, so that
after omitting it we can solve the inequality $k(k+7) \lesssim 2(r+1)(d+1)$
explicitly and obtain the final ``soft'' bound, in terms of $N=k+r-1$:
\[
  N \lesssim \frac12\Bigl(\sqrt{8(r+1)(d+1)+49}-7\Bigr)+r-1.
\]
It signifies that it is unlikely to identify the correct recurrence when
$N,r,d$ satisfy the inequality. However, since it was obtained by rough
estimates, it does not allow us to draw any definite statements in form of a
theorem.

The following experiments show that nevertheless our bound yields quite accurate
predictions: we randomly generated recurrences for $r=4,8$ and $d=0,\dots,6$
with integer coefficients $|c_{i,j}|<2^{16}$. The plots show the smallest~$N$
for which the recurrence could be guessed from $a_0,\dots,a_N$ (depicted as
dots), and the graph of the bound (depicted as a line):

\begin{center}
  \begin{tabular}{@{}c@{\qquad}c@{}}
  \begin{tikzpicture}
    \begin{axis}[width=4.8cm,ymin=0,ymax=18]
      \addplot[thick,black] coordinates {%
        (0., 4.22) (0.5, 4.72) (1., 5.18) (1.5, 5.6) (2., 6.) (2.5, 6.37) (3., 6.73)
        (3.5, 7.07) (4., 7.39) (4.5, 7.7) (5., 8.) (5.5, 8.29) (6., 8.57)};
      \addplot[only marks, mark size=1pt] coordinates {%
        (0,5) (1,7) (2,7) (3,8) (4,9) (5,10) (6,11)};
    \end{axis}
  \end{tikzpicture}
  &
  \begin{tikzpicture}
    \begin{axis}[width=4.8cm,ymin=0,ymax=18]
      \addplot[thick,black] coordinates {%
        (0., 9.) (0.5, 9.76) (1., 10.45) (1.5, 11.07) (2., 11.64) (2.5, 12.17) (3., 12.68)
        (3.5, 13.16) (4., 13.61) (4.5, 14.05) (5., 14.47) (5.5, 14.87) (6., 15.26)};
      \addplot[only marks, mark size=1pt] coordinates {%
        (0,11) (1,12) (2,14) (3,15) (4,15) (5,16) (6,16)};
    \end{axis}
  \end{tikzpicture}
  \\
  $r=4$ & $r=8$
  \end{tabular}
\end{center}

\section{The Non-Generic Case}\label{sec:nongeneric}

Sequences arising in applications cannot simply be assumed to behave like generic
sequences. In order to get some idea how our method performs in practice, we have
evaluated it experimentally. The Online Encyclopedia of Integer Sequences (OEIS,~\cite{sloane})
meanwhile contains more than 350000 sequences arising from all kinds of different contexts.
Salvy~\cite{salvy05} estimated in 2005 that up to 25\% of the entries in the OEIS are D-finite.
We have gone through the entire database and determined all the sequences for which
at least 50 terms were given and for which linear algebra can find a recurrence using
no more than 250 terms. Cases where the linear algebra guesser already succeeded with 10
terms or less were discarded from consideration (too simple). The search resulted in
some 6700 hits. For each of these sequences, we determined the minimal number of terms
needed by a guesser based on linear algebra in order to detect a recurrence as well as
the minimal number of terms needed by an LLL-based guesser in order to detect a recurrence
that is consistent with the one found by the linear-algebra based guesser. 

The LLL-based guesser was called with the four bases $x^n$, $(x+\lfloor r/2\rfloor)^n$,
$\binom{x+n}{n}$, and $\binom{x+\lfloor r/2\rfloor+n}{n}$. Each of these bases beats
all the others on at least some examples, although there is a clearly visible trend
that $(x+\lfloor r/2\rfloor)^n$ wins most of the time. For our evaluation, we determined
the minimal number of terms of a sequence for which the LLL-based guesser recognizes
the recurrence for at least one of the bases. If $N_{\text{linalg}}$ is the minimal
number of terms needed by the linear algebra guesser and $N_{\text{LLL}}$ the minimal
number of terms needed by the LLL-based guesser with at least one of the four bases,
the following picture shows the quotients $N_{\text{LLL}}/N_{\text{linalg}}$ for all
the 6700 sequences taken from the OEIS, in increasing order.

\begin{center}
  \begin{tikzpicture}
    \begin{axis}[width=5.8cm,ymin=-.1,ymax=1.1]
      \addplot[thick,black] table[col sep=comma]{saving.csv};
    \end{axis}
  \end{tikzpicture}
\end{center}

On the average, the LLL-guesser needs 61.2\% of the terms needed by the linear algebra
guesser. For more than 99\% of the sequences, we were able to save at least one term.
The whole computation using our Mathematica implementation of Alg.~\ref{alg:3} took roughly
one CPU year.

Equations in the test set of 6700 sequences are not equally large, and different sizes
are far from equally distributed. The following picture shows the numbers $N_{\text{linalg}}$
in increasing order, indicating that almost all the sequences have rather short equations.

\begin{center}
  \begin{tikzpicture}
    \begin{axis}[width=5.8cm]
      \addplot[thick,black] table[col sep=comma]{sizes.csv};
    \end{axis}
  \end{tikzpicture}
\end{center}

The strong dominance of short equations induces a bias into the saving statistics shown
before. If we restrict the attention to sequences with $N_{\text{linalg}}>50$, we see
a significant saving is also possible for these cases. The mean for this restricted
sample is 64.6\%, and for 12.4\% of these cases no saving was possible at all. In view
of the analysis of Sect.~\ref{sec:generic}, we expected that more saving is possible
for larger equations than on small equations. 

\begin{center}
  \begin{tikzpicture}
    \begin{axis}[width=5.8cm,ymin=-.1,ymax=1.1]
      \addplot[thick,black] table[col sep=comma]{saving-large.csv};
    \end{axis}
  \end{tikzpicture}
\end{center}

The next picture shows the bias towards small equations in the test set from a different
perspective. We show here all the
points $(N_{\text{linalg}},N_{\text{LLL}})$. The number of points in the picture is far
less than 6700 because many sequences share the same point. Taking multiplicities into
account, the cloud's center of gravity is $(19.4, 12.0)$. 

\begin{center}
  \begin{tikzpicture}
    \begin{axis}[width=5.8cm]
      \addplot[only marks, mark size=.25pt] table[col sep=comma]{scatter.csv};
    \end{axis}
  \end{tikzpicture}
\end{center}

\section{Applications}

We have also tried to find something new with LLL-based guessing. 
About 7500 of the sequences in the OEIS are labeled by the keyword `hard', which is supposed
to indicate that a substantial amount of work would be needed to compute further
terms. Usually this simply means that the best known algorithm for computing
the $n$th term of the sequence needs time exponential in~$n$. Many of these hard
sequences are defined through a number-theoretic property and are unlikely to be
D-finite. If we discard from the 7500 hard sequences those in which the word `prime'
appears in the description, we are left with 3764 sequences. 545 of them
have at least 30 and at most 150 known terms.

Classical guessing does not find a recurrence or differential equation for any of
these sequences.
We applied our LLL-based approach for all $r,d$ with $(r+1)(d+2)\leq 3N$, where
$N$ is the number of known terms.
A recurrence was automatically rejected if the next ten terms produced by the
recurrence from the known terms were not all integers.
For 37 sequences, a recurrence was found that passed this test.
These recurrences were inspected by hand, and in most cases, we remain skeptic
about their general validity. For example, if the recurrence we found for A054500
were correct, all sequence terms for $n\geq30$ would be equal to~29, which does
not seem right. 

At least in the following two cases, we believe that the guessed recurrence is correct.

The $n$th term of the sequence A307717 is defined as the number of palindromic integers
(base~10) whose square is palindromic as well and has $n$ decimal digits. The sequence
begins with $4$,~$0$, $2$, $0$, $5$, $0$,~$3$, and the OEIS contains 70 terms. With these terms, we
were able to recognize a recurrence of order~6 and degree~9 and small coefficients. The
guessed recurrence admits a quasi-polynomial solution, suggesting the closed form
(valid for $n>1$)
\[
 a_n=\left\{\begin{array}{ll}
 0&\text{ if $n=0\bmod2$}\\[3pt]
 \tfrac1{192}(195+203n-15n^2+n^3)&\text{ if $n=1\bmod4$}\\[3pt]
 \tfrac1{384}(501+107n-9n^2+n^3)&\text{ if $n=3\bmod4$}
 \end{array}\right.
\]
The OEIS entry for the related sequence A218035 contains a proof of this formula,
which indicates that A307717 is misclassified as `hard' and confirms that our guessed
recurrence was indeed correct.

A more interesting case is the sequence A189281, whose $n$th term is defined as the
number of permutations $\pi\in S_n$ such that $\pi(i+2)-\pi(i)\neq2$ for $i=1,\dots,n-2$.
The sequence appears in~\cite{kotesovec13} and begins with $1$,~$1$, $2$, $5$,~$18$, and the OEIS
contains 36 terms. With these terms, we were able to identify a convincing recurrence
of order~10 and degree~6 and small coefficients. We find this guess convincing for
several reasons:
\begin{itemize}
\item The recurrence can also be detected if we only supply 29 of the 36 known
  terms to the guesser. 
\item If $(c_n)_{n\geq0}$ denotes the sequence defined by the initial
  values of A189281 and the guessed recurrence, then $c_{36},\dots,c_{100}$ turn out
  to be integers.
\item If we put the terms $c_0,\dots,c_{100}$ into a classical guesser, it
  finds a recurrence of order~8 with polynomial coefficients of degree~11.
  This means that the guessed order-10 operator has a right factor of order~8, a
  feature that is not to be expected for a random operator.
\item The exponential generating function of the sequence $(c_n)$ satisfies a
  differential equation that only has regular singularities. 
\item With the Mathematica code provided in the OEIS, we have made an effort
  to compute four more terms of the sequence A189281, and we found them
  to match $c_{36},c_{37},c_{38},c_{39}$, meaning that four terms not used
  for guessing were correctly predicted by the guessed equation. 
\end{itemize}
\begin{conjecture}
  The sequence A189281 satisfies a recurrence of order~8 and degree~11.
\end{conjecture}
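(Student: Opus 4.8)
The plan is to \emph{prove} the conjectured recurrence by first deriving, from the combinatorial definition of A189281, a linear recurrence with polynomial coefficients that the sequence provably satisfies, and then matching this certified recurrence against the guessed operator of order~$8$ and degree~$11$. The starting point is a cluster decomposition. The defining condition forbids, for every~$i$, the configuration $\pi(i+2)=\pi(i)+2$, and it is invariant under $\pi\mapsto\pi^{-1}$. Applying inclusion--exclusion over the occurrences of this configuration, one marks a subset~$S$ of the indices at which $\pi(i+2)=\pi(i)+2$ holds, so that $a_n=\sum_{(\pi,S)}(-1)^{|S|}$. Two marked occurrences interact only when their positions overlap, i.e.\ only within a single residue class modulo~$2$; hence a maximal chain of marked occurrences forces a monotone block occupying an arithmetic progression $p,p+2,\dots,p+2\ell$ of positions and carrying the arithmetic progression $w,w+2,\dots,w+2\ell$ of values. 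Consequently a marked permutation is exactly a tiling of the odd and even \emph{position} lines into consecutive runs, a tiling of the odd and even \emph{value} lines into consecutive runs, and a size-preserving bijection between position-blocks and value-blocks. Writing $m_s$ for the number of blocks of size~$s$ and $b=\sum_s m_s$, this yields the explicit formula
\[
  a_n=\sum_{\mathbf m}(-1)^{\,n-b}\,T(\mathbf m)^2\prod_s m_s!\,,
\]
where $T(\mathbf m)$ counts the tilings of the two position lines (of lengths $\lceil n/2\rceil$ and $\lfloor n/2\rfloor$) with block-size profile~$\mathbf m$; one verifies $a_1,a_2,a_3,a_4=1,2,5,18$ directly from it.

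The next step is to establish D-finiteness of the generating function from this formula. The factor $\prod_s m_s!$ arises from the size-respecting matching, and the square $T(\mathbf m)^2$ from the identical structure of the position and value sides; both are reproduced by standard holonomy-preserving operations. Concretely, the per-line tiling generating function $1/(1-\sum_s z_s x^s)$ is rational, the size-respecting matching is a Hadamard (Borel--Laplace) pairing that supplies the factorials $\prod_s m_s!$, and the parity constraint that the number of odd cells equals $\lceil n/2\rceil$ is enforced by a diagonal. Since Hadamard products, Borel--Laplace transforms, and diagonals all preserve D-finiteness, the (exponential) generating function $F(x)=\sum_n a_n x^n/n!$ is D-finite, and carrying out these operations produces an explicit annihilating operator~$L$ with polynomial coefficients. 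This is consistent with the reported observation that $F$ has only regular singularities, as is expected for a generating function assembled from algebraic data by such operations.

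Once one certified recurrence $L\cdot a=0$ is in hand, the proof concludes by a routine finite check. The sequence $a$ is then determined by finitely many initial values, so verifying that the guessed order-$8$, degree-$11$ operator~$G$ also annihilates~$a$ reduces to a finite computation: the sequence $(G\cdot a)_n$ is annihilated by an operator computable from $L$ and~$G$, and hence vanishes identically as soon as it vanishes for a number of indices bounded by the order of that operator. Taking the greatest common right divisor of $L$ and~$G$ then identifies the minimal annihilating operator and confirms that it has order~$8$ and degree~$11$, which is the assertion of the conjecture.

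The hard part is the D-finiteness step: organizing the size-respecting matching together with the parity split into a finite, explicit chain of holonomy-preserving operations, and controlling the order and degree of the resulting operator~$L$ tightly enough that the comparison in the final step is a genuinely finite certified check rather than an unbounded search. A purely computational alternative is to feed the multisum of the first step directly to creative telescoping; the same bookkeeping difficulties then reappear as the challenge of bringing that sum into proper hypergeometric (holonomic-summand) form before Zeilberger's or Chyzak's algorithm can certify a recurrence.
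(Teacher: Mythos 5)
The statement you are trying to prove is, in the paper, explicitly an unproven conjecture: the authors write ``We have not tried to prove this conjecture,'' and they offer only heuristic evidence (stability of the guess under fewer terms, integrality of predicted terms, regular singularities, the existence of a right factor). There is therefore no proof in the paper to compare yours against; your submission has to stand on its own as a proof, and it does not.

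Your combinatorial first step is plausible and worth developing: marking occurrences of $\pi(i+2)=\pi(i)+2$ and chaining them does yield monotone blocks on the two parity classes of positions and of values, and the signed sum $\sum_{\mathbf m}(-1)^{n-b}T(\mathbf m)^2\prod_s m_s!$ looks like a correct inclusion--exclusion identity (modulo checking that a position-block of one parity may be matched to a value-block of either parity, which your formula silently absorbs into $T(\mathbf m)$). The genuine gap is the second step, which you yourself flag as ``the hard part'': the claim that this multisum is reached by a \emph{finite} chain of holonomy-preserving operations is asserted, not established, and it is exactly where the difficulty lives. The factor $\prod_s m_s!$ ranges over unboundedly many distinct block sizes $s$, so it is not a Hadamard or Borel--Laplace pairing in finitely many variables; the sum is over all partitions into blocks, an infinite-dimensional index set, and no bounded-order transfer-matrix or creative-telescoping reduction is exhibited. (Contrast the classical $d=1$ succession problem, where the analogous sum collapses to a single binomial sum precisely because there is only one line and the tiling count is a single binomial coefficient.) Without an explicit certified operator $L$, the concluding ``routine finite check'' and the GCRD comparison never get off the ground. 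As written, this is a research program --- and a sensible one --- but the conjecture remains unproven.
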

We have not tried to prove this conjecture.

\section{Conclusion}

We have demonstrated that a linear recurrence equation can be guessed by using
significantly fewer data compared to classical approaches, by imposing certain
restrictions on its integer coefficients. Driving this idea to its limit, we can
ask how many terms are really necessary for finding a recurrence. There is evidence
that just a single term can be enough. We illustrate this with an example.

\begin{example}
  The central Delannoy numbers $D_n=\sum_{k=0}^n \binom{n}{k} \binom{n+k}{k}$
  satisfy a recurrence of order~$2$ and degree~$1$:
  \[
    (n+2) D_{n+2} - (6n+9) D_{n+1} + (n+1) D_n = 0.
  \]
  We found that this recurrence can be recovered from the single sequence term
  $D_8=265729$. More precisely, we exhaustively enumerated all recurrences of
  order~$2$ and degree~$1$, that have integer coefficients $\leq9$ in absolute
  value, and combined them with all possible initial values $0\leq a_0\leq9$
  and $0\leq a_1\leq9$, yielding $19^6\cdot 10^2=4{,}704{,}588{,}100$ sequences
  in total. Of course, some recurrences can be immediately discarded, because
  \begin{itemize}
  \item e.g., the integer coefficients have a nontrivial gcd,
  \item or the leading coefficient vanishes for some $n\in\set{N}_0$,
  \item or the leading / trailing coefficient is zero (order drop).
  \end{itemize}
  From the remaining cases, we selected those whose sequence terms
  $a_2,\dots,a_{20}$ were all integral. Then, for $n=2,3,\dots,20$, we
  recorded the numbers of sequences that agreed with the sequence~$D_n$ at
  positions~$n$, i.e., $a_n=D_n$, yielding the following statistics:
  \[
  \begin{array}{r|cccccccccc}
    n & 2 & 3 & 4 & 5 & 6 & 7 & 8 & 9 & \cdots & 20 \\ \hline
    \rule{0pt}{12pt}%
    \# & 85994 & 48240 & 3056 & 853 & 25 & 8 & 1 & 1 & \cdots & 1
  \end{array}
  \]
  For example, there were $3056$ sequences $(a_n)$ which had $a_4=D_4=321$ (but
  not necessarily $a_0=D_0$ etc.), and only a single sequence that agreed at
  $n=8$. Hence, the above recurrence is uniquely determined by the single
  sequence term~$D_8$, under the given restrictions on the size of the
  coefficients and initial values. This toy example was solved by brute force;
  we are not aware of any algorithm that could solve such problems
  efficiently in general.

  While at first glance, the index $n=8$ in our example appears to be extremely
  low, this phenomenon can be understood in general by relating the number of bits necessary
  to encode the recurrence with the bitsize of the $n$th sequence term. But even
  if the latter exceeds the former, there is no guarantee that
  a unique solution exists --- this would require to decide some variant of the
  Skolem problem, which already for C-finite recurrences is still unsolved.
\end{example}

\noindent\textbf{Acknowledgement.}
We thank the referees for their careful reading and their valuable suggestions.

\bibliographystyle{plain}
\bibliography{bib}

\end{document}